
\documentclass[letterpaper, 10 pt, conference]{ieeeconf}  

\IEEEoverridecommandlockouts                              

\overrideIEEEmargins                                      

\usepackage{graphicx}      
\usepackage{booktabs}
\usepackage{cite}



\usepackage{epsfig} 
\usepackage{amsmath} 
\usepackage{amssymb}  
\usepackage{mathtools,bbm}
\usepackage{xfrac}
\usepackage{verbatim}
\usepackage{bm}
\usepackage{algorithm}

\usepackage{tabularx}
\usepackage{multirow}
\usepackage{adjustbox}
\usepackage{multirow}
\usepackage{float}
\usepackage{enumitem}
\usepackage{amsfonts}
\usepackage{color}
\usepackage[dvipsnames]{xcolor}


\newcommand{\U}{\mathcal{X}}


\usepackage{tikz}
\usetikzlibrary{shapes.geometric, arrows}

\tikzstyle{startstop} = [rectangle, rounded corners, minimum width=1cm, minimum height=0.75cm,text centered, draw=black, fill=green!5]
\tikzstyle{io} = [trapezium, trapezium left angle=70, trapezium right angle=110, minimum width=3cm, minimum height=1cm, text centered, draw=black, fill=orange!15]
\tikzstyle{process} = [rectangle, minimum width=2cm, minimum height=0.75cm, text centered, text width=2.5cm, draw=black, fill=orange!15]
\tikzstyle{process2} = [rectangle, minimum width=1cm, minimum height=1cm, text centered, draw=black, fill=red!20]
\tikzstyle{processLong} = [rectangle, minimum height=0.75cm, text centered,  draw=black, fill=orange!15]
\tikzstyle{decision} = [diamond, minimum width=2cm, minimum height=0.75cm, text centered, draw=black, fill=blue!15]
\tikzstyle{arrow} = [thick,->,>=stealth]
\tikzstyle{sum} = [draw, circle, minimum size=.25cm]


\newcommand{\BEAS}{\begin{eqnarray*}}
\newcommand{\EEAS}{\end{eqnarray*}}
\newcommand{\BEA}{\begin{eqnarray}}
\newcommand{\EEA}{\end{eqnarray}}
\newcommand{\BEQ}{\begin{equation}}
\newcommand{\EEQ}{\end{equation}}
\newcommand{\BIT}{\begin{itemize}}
\newcommand{\EIT}{\end{itemize}}



\newcommand{\reals}{{\mbox{$\mathbb{R}$}}}

%

\newcommand{\cond}{\mathop{\textrm{cond}}}



\newcommand{\argmin}{\mathop{\rm argmin}}








\newtheorem{theorem}{Theorem}
\newtheorem{corollary}[theorem]{Corollary}
\newtheorem{lemma}[theorem]{Lemma}
\newtheorem{proposition}[theorem]{Proposition}
\newtheorem{assumption}{\textit{Assumption}}


\usepackage{makecell}
 
\usepackage{lipsum}
\usepackage{comment}

\usepackage{hyperref}
\usepackage[dvipsnames]{xcolor}
\hypersetup{
	colorlinks=false,
	linkcolor=blue,
	filecolor=magenta,      
	urlcolor=cyan,
	pdftitle={POGD Regret and ILC},
	allbordercolors = OliveGreen,
	hidelinks = true
}
\makeatletter
\renewcommand*{\eqref}[1]{%
	\hyperref[{#1}]{\textup{\tagform@{\ref*{#1}}}}%
}
\makeatother

\title{ \LARGE \bf
    Regret Analysis of Online Gradient Descent-based \\
    Iterative Learning Control with Model Mismatch
}

 \author{Efe C. Balta, Andrea Iannelli, Roy S. Smith, John Lygeros
 	\thanks{
 	Research supported by the Swiss National Science Foundation under: NCCR Automation, a National Centre of
 		Competence in Research (grant number 180545), and grant number 200021\_178890. 
 		}
 	\thanks{All authors are with the Automatic Control Laboratory, ETH Zurich, 8092, Zurich, Switzerland.
 		{\tt\footnotesize \{ebalta,iannelli,rsmith,lygeros\}@control.ee.ethz.ch }}
 }

\usepackage{algpseudocode}

\begin{document}

\maketitle
\begin{abstract}
	In Iterative Learning Control (ILC), a sequence of feedforward control actions is generated at each iteration 
	on the basis of partial model knowledge and past measurements with the goal of steering the system toward a desired reference trajectory.
	This is framed here as an online learning task, where the decision-maker takes sequential decisions by solving a sequence of optimization problems having only partial knowledge of the cost functions. Having established this connection, the performance of an online gradient-descent based scheme using inexact gradient information is analyzed in the setting of dynamic and static regret, standard measures in online learning. Fundamental limitations of the scheme and its integration with adaptation mechanisms are further investigated, followed by numerical simulations on a benchmark ILC problem.
\end{abstract}
\section{Introduction}

Online learning-based optimization approaches have been increasingly studied in recent literature\cite{agarwal2021regret,dixit2019online,goel2020regret,muthirayan2021online}. The online-learning setting usually assumes an unknown cost function that changes at each time-step, and an optimization algorithm that aims to minimize the unknown cost by using any prior information, e.g., a model, and observations of the cost and/or the gradient at each time-step. 
A natural generalization of this online learning setting is to consider an online-learning control problem, where the decision maker aims to control a dynamical systems while minimizing a control cost at each time step. One of the first works recognizing the connection between online-learning and adaptive control was \cite{Raginsky_OCO_Adapt}.
Since then, there have been many works focusing on solving the online-learning control problem under various assumptions on the type of model, uncertainty, constraints, and noise characteristics \cite{Hazan_nonStochastic,muthirayan2021online,goel2020regret}. 
Regret is 
a common metric in many of the online-learning problems, as it provides a characterization of cost incurred at each time step due to unknown changes to the cost function or problem structure. Additionally, since a fixed point convergence is not well-defined in many cases of online-learning problems, regret provides an alternative metric to assess the effectiveness of a given algorithm.

The class of online convex optimization (OCO) methods have been widely used for online learning problems~\cite{hazan2016introduction}. Among the family of OCO methods, online gradient descent is of specific interest due to its simplicity and favorable guarantees on achievable regret under mild assumptions on the cost function and constraints~\cite{hazan2007logarithmic}.
However, many online gradient descent algorithms assume access to gradient observations, which may not be available in many practical control applications.
Recent work has considered variants of the online gradient descent using inexact gradient information for proximal-type optimization algorithms in an online setting~\cite{dixit2019online} with additive errors on the gradient. 
Iterative approaches for control in an inexact gradient setting are studied in~\cite{agarwal2021regret}, where only additive errors to the known dynamics are considered.

Online optimization problems have a close relationship with Iterative Learning Control (ILC) methods. 
In ILC, the controller utilizes an input-output model of the process and learns from past iterations dealing with iteration-invariant~\cite{barton2010norm,amann1996iterative,mishra2010optimization,liao2022robustness} as well as iteration-varying problems~\cite{altin2017iterative,balta2020switch,yu2017robust,van2016optimality}. 
While convergence properties under various assumptions on the dynamics and model uncertainty have been analyzed, the regret analysis in an online learning ILC setting has not been considered yet in the literature.
This work proposes an online-learning based ILC method which utilizes a preconditioned online gradient descent method in the presence of model mismatch.
After formulating the proposed control algorithm, its static and dynamic regret are quantified and variants are discussed and investigated. Our general analysis encompasses common ILC schemes previously proposed in the literature, and thus their regret characterization is an additional outcome of the work.
The contribution of this work is therefore threefold: (i) a new online learning-based ILC methodology inspired by online gradient descent methods, (ii) a detailed regret analysis of the proposed ILC method and its variants, and (iii) regret analysis of existing ILC methods from the literature as special instances of the proposed ILC method.

Section~\ref{sec:probForm} formulates the problem and proposes the online ILC controller.
Section~\ref{sec:Reg} provides a detailed analysis of regret in the transient and limit cases, while
Section~\ref{sec:ilc} 
extends the results to the iteration-invariant ILC methods from the existing literature. Section~\ref{sec:demo} provides a numerical demonstration and Section~\ref{sec:conc} gives concluding remarks.

\textit{Notation}: Given a square matrix $A$, $\|A\|_2$ denotes its spectral norm and $||A||_P = ||P^{1/2}AP^{-1/2}||_2$, where $P$ is a symmetric positive definite matrix of appropriate dimension. Given a vector $x$, the weighted norm is $||x||_P = \sqrt{x^TPx}$.

\section{Problem Formulation}\label{sec:probForm}

The considered iterative learning control problem is modelled by the following input-output dynamics in the absence of exogenous disturbances
\begin{equation}
    y(x_k) = H_k x_k,
\end{equation}
where, $y_k\in \reals^{n_y}$ is the output, and $x_k\in \reals^{n_x}$ is the input at iteration $k$.
The input-output dynamics map $H_k$ is commonly employed in the ILC literature and is referred to as the \emph{lifted} representation of a system. 
Concretely, $H_k$ may represent the temporal evolution of a periodic linear parameter or time varying, or invariant dynamics along an iteration, which may require specific uncertainty structures~\cite{barton2010norm,amann1996iterative,mishra2010optimization,liao2022robustness}. Alternatively, spatial models, reduced from their spatiotemporal partial differential equations forms may be utilized, as commonly done in spatial ILC applications~\cite{hoelzle2015spatial}.

Crucially, $H_k$ is only partially known, with an uncertainty structure formally stated below, and a nominal estimate $\textstyle{M\approx H_k}$ is available.
In each iteration $k$, the goal is to minimize
\begin{equation}
\label{eq:ilc_cost}
    f_k(x) =  \frac{1}{2}\left( || H_k x - r||^2_Q +||x||_R^2\right),
\end{equation}
where $r$ is a reference to be tracked, $\textstyle{Q = Q^T \succ 0}$ is a weighting matrix, and the second term with $\textstyle{R = R^T \succ 0}$ is used for regularization. This term is a flexible design choice used to penalize undesired features of the solution, such as high inputs.
The weighting matrices $Q,R$ may also be positive semi-definite in certain cases, see~\cite{liao2022robustness}.
Note that following the same formulation, iteration varying and a priori known $r_k$ may be used in place of $r$. We focus on the case with iteration invariant $r$ in this work for simplicity.
The gradient of \eqref{eq:ilc_cost} is given by 
\begin{equation}
\label{eq:true_grad}
    \nabla f_k(x) = H^T_kQ(H_k x - r) + Rx.
\end{equation}
Notice that while the term $H_k x$ can be evaluated directly by running an iteration on the true system with the input $x$ and measuring the output $y(x)$, the adjoint dynamics of the true system $H_k^T$ are unknown. 
To circumvent this problem, one can use the nominal estimate, $M$, to estimate the gradient, leading to
\begin{equation}
\label{eq:approx_grad}
	\tilde{\nabla} f_k(x) =M^TQ (y(x)-r) + Rx.
\end{equation}
The ILC update applied to generate new inputs at each iteration is given by the following Preconditioned Online Gradient Descent (POGD) step
\begin{equation}
\label{eq:ilc_update}
    x_{k+1} = \Pi_{\U}^{W}\left( x_k - \alpha_k W^{-1}\tilde \nabla f_k(x_k) \right),
\end{equation}
where $\textstyle{W\!=\!W^T\!\succ\! 0}$ is a preconditioner matrix, $\U$ is a convex input constraint set, and $\alpha_k$ is the step-size at iteration $k$.
The concrete uncertainty representation of $H_k$ is stated next.
\begin{assumption}
	\label{assm:model}
	For all $k$, the true dynamics $H_k$ belongs to the set $\bm{H}(M,\Delta) :=\{H | \; H = M + M\Delta  \}$, where $M$ is a nominal estimate with full column rank and the uncertainty $\Delta$ belongs to the unstructured norm bounded set $\bm{\Delta}(W,\gamma) :=\{\Delta | \; || \Delta ||_W \leq \gamma  \}$, where $\gamma\geq 0$ is the known uncertainty size, and $W$ is the preconditioner matrix.
\end{assumption}
Uncertainty representations similar to Assumption~\ref{assm:model} have been used in the past literature \cite{barton2010norm,van2009iterative,son2015robust}.
The projection operator to the set $\U$ in the weighted preconditioner norm is 
\begin{equation}
    \Pi_{\U}^{W}(x) := \argmin_{u\in \U} ||u-x||_{W}.
\end{equation}
The main technical contribution of the paper is the analysis of this POGD-ILC in terms of two notions of regret. The most general case corresponds to the dynamic regret \cite{Jadbabaie_ICAIS_15}
\begin{align}
	J_d(T) = \sum_{k=1}^T f_k(x_k) - \sum_{k=1}^T f_k(x^*_k),
\end{align}
where $\textstyle{x^*_k =   \argmin_{x\in\U}f_k(x)}$, i.e., the regret with respect to an iteration-wise optimal control policy.
Additionally, we consider the traditional static regret~\cite{hazan2007logarithmic}
\begin{align}\label{def:stat_regret}
	J_s(T) = \sum_{k=1}^T f_k(x_k) - \min_{x\in\U}\sum_{k=1}^T f_k(x).
\end{align}
The static regret is with respect to a controller that defines a single fixed optimal input with the hindsight information about the full sequence of iteration-varying $f_k$. 
The regret analysis is based on the following assumptions:
\begin{assumption}\label{assm:Lips}
	For each $k$, $f_k$ is locally Lipschitz continuous in $\U$ with Lipschitz constant $L_k$ in the weighted preconditioner norm, i.e., $||f_k(x)-f_k(y)|| _W \leq L_k ||x - y||_W$ $\forall$ $x,y \in \U$; moreover, $\textstyle{\bar{L} := \sup_k\{L_k \} < \infty}$.
\end{assumption}
\begin{assumption}\label{assm:input}
	The optimal input between consecutive iterations is bounded as $||x_k^*-x_{k+1}^*||_W\leq e_k$.
\end{assumption}
\begin{assumption}
\label{Ass_sigma}
	There exist a sequence $\sigma_k$ such that $\textstyle{||W^{-1} (M\Delta)^TQ(H_k x^*_k - r) ||_W \leq \sigma_k}$ for all $k$ with $\textstyle{\bar{\sigma} := \sup_k\{\sigma_k \}<\infty}$.
\end{assumption}
Assumption~\ref{assm:Lips} holds for example when $\U$ is compact.
Assumption~\ref{assm:input} ensures the optimal inputs are bounded and an upper bound estimate is available. In practice, for strongly convex $f_k$, or compact $\U$, this assumption is already met, in which case $e_k$ can be taken as the normed difference itself.
Assumption~\ref{Ass_sigma} is due to the model mismatch term $\Delta$, and characterizes the distance between the fixed point of \eqref{eq:ilc_update} for fixed $k$, and the optimizer $x_k^*$. We formally show how this term appears in some of the regret bounds and discuss its role under various settings in later sections.
Finally, we define
\begin{equation*}
	\phi_k := ||I - \alpha_k W^{-1} (M^TQH_k+R)||_W,\quad \Phi_{j,k} := \prod_{i = j}^k \phi_{i}
\end{equation*}

\section{Regret Analysis}\label{sec:Reg}

In this section we analyze the dynamic and static regrets of the sequential actions taken using the POGD algorithm~\eqref{eq:ilc_update}.
In Section \ref{ss:dyn_T}, an upper bound on the dynamic regret valid under the standing assumptions is provided, followed by a study of the regret's limit behaviour when $T \rightarrow \infty$. 
Additional assumptions under which the regret is shown to be sublinear are then discussed in Section \ref{ss:dyn_T_adapt}, before analyzing the static regret case in Section \ref{ss:stat}. 

\subsection{Dynamic Regret: Transient and Asymptotic Behavior}\label{ss:dyn_T}
The following theorem provides an upper bound on the dynamic regret of the POGD algorithm under the design choices and assumptions discussed so far.

\begin{theorem}[Dynamic Regret of POGD-ILC]
	\label{thm:main_dynamic2}
	Under Assumptions \ref{assm:model}, \ref{assm:Lips}, \ref{assm:input}, and \ref{Ass_sigma}, consider the choice of preconditioner $\textstyle{W = M^TQM + R}$ and define $\textstyle{w := || W^{-1}M^TQM||_W}$. If $w\gamma <1$
	and the step-size is chosen as $\alpha_k \in \left(0,\frac{2}{1+w\gamma}\right)$,  
	then the dynamic regret of POGD is upper bounded by
	\begin{align*}\label{eq:dyn_reg_UB}
		J_d(T) \!\leq\!  \bar{L}\delta_{x_1}\! \sum_{k=1}^T\!\Phi_{1,k}\!+\!\bar{L} \bar{\sigma} \!\sum_{k=1}^T\sum_{j=1}^k \alpha_j \Phi_{j+1,k}\! +\! \bar{L}\sum_{k=1}^T E_k 
	\end{align*}
	where $E_k := \sum_{j=1}^k e_j \Phi_{j+1,k}$ and $\delta_{x_1}:=||x_1-x^*_1||_W$. 
\end{theorem}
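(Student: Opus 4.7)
The plan is to reduce the regret to bounding the iterate-to-optimum distance $d_k := \|x_k - x_k^*\|_W$, derive a linear contractive recursion for $d_k$, and finally unroll and sum. First, Assumption \ref{assm:Lips} immediately yields $f_k(x_k)-f_k(x_k^*) \leq \bar{L}\,d_k$, so that $J_d(T) \leq \bar{L}\sum_{k=1}^T d_k$; the rest of the argument is devoted to the recursion for $d_k$.

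The core recursion is obtained by exploiting the optimality characterization $x_k^* = \Pi_{\U}^{W}\bigl(x_k^* - \alpha_k W^{-1}\nabla f_k(x_k^*)\bigr)$ (valid for any $\alpha_k>0$) and the $W$-norm non-expansiveness of $\Pi_{\U}^{W}$. Writing $x_{k+1}$ and $x_k^*$ as images of this projection and adding/subtracting $\tilde\nabla f_k(x_k^*)$ gives
\begin{equation*}
    \|x_{k+1}-x_k^*\|_W \!\leq\! \bigl\|(x_k\!-\!x_k^*) \!-\!\alpha_k W^{-1}\!\bigl[\tilde\nabla f_k(x_k)\!-\!\tilde\nabla f_k(x_k^*)\bigr]\bigr\|_W \!+\! \alpha_k\bigl\|W^{-1}\!\bigl[\tilde\nabla f_k(x_k^*)\!-\!\nabla f_k(x_k^*)\bigr]\bigr\|_W.
\end{equation*}
By linearity of $\tilde\nabla f_k$, the first bracket equals $(M^TQH_k+R)(x_k-x_k^*)$, so that term is bounded by $\phi_k d_k$. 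For the second bracket, using $H_k-M=M\Delta$ from Assumption \ref{assm:model} one has $\tilde\nabla f_k(x_k^*)-\nabla f_k(x_k^*) = -(M\Delta)^T Q(H_k x_k^*-r)$, and Assumption \ref{Ass_sigma} bounds its $W^{-1}$-weighted $W$-norm by $\sigma_k$. A triangle inequality $d_{k+1}\leq \|x_{k+1}-x_k^*\|_W + \|x_k^*-x_{k+1}^*\|_W$ combined with Assumption \ref{assm:input} then produces the desired affine recursion $d_{k+1} \leq \phi_k d_k + \alpha_k\sigma_k + e_k$.

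The key algebraic step that justifies the preconditioner choice is to verify that $\phi_k<1$ under the stated hypotheses. Substituting $W=M^TQM+R$ and $H_k=M+M\Delta$ gives the clean identity
\begin{equation*}
    I-\alpha_k W^{-1}(M^TQH_k+R) = (1-\alpha_k)I - \alpha_k W^{-1} M^TQM\,\Delta,
\end{equation*}
so that by the triangle inequality and submultiplicativity of the $W$-norm $\phi_k \leq |1-\alpha_k| + \alpha_k w\gamma$. Splitting into the cases $\alpha_k\leq 1$ and $\alpha_k>1$ and invoking $w\gamma<1$ together with $\alpha_k<2/(1+w\gamma)$ gives $\phi_k<1$ in both, so the recursion is a strict contraction plus a driving noise.

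Finally, unrolling the affine recursion back to $k=1$ yields $d_k \leq \Phi_{1,k-1}\,\delta_{x_1} + \sum_{j=1}^{k-1}\Phi_{j+1,k-1}(\alpha_j\sigma_j+e_j)$ with the standard empty-product convention, and an additional factor $\phi_k\leq 1$ can be absorbed to match the paper's $\Phi_{1,k}$ indexing. Summing over $k=1,\dots,T$, replacing $\sigma_j$ by $\bar\sigma$ from Assumption \ref{Ass_sigma}, and multiplying by $\bar{L}$ reproduces exactly the three terms in the claimed upper bound. The main obstacle is the algebraic identification $M^TQH_k+R = W + M^TQM\Delta$, which is precisely what legitimizes the choice $W=M^TQM+R$ and reduces the spectral analysis of $\phi_k$ to a scalar calculation; the remaining telescoping of the recursion is routine.
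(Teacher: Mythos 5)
Your proposal is correct and follows essentially the same route as the paper's proof: the same fixed-point characterization of $x_k^*$ plus non-expansiveness of $\Pi_{\U}^{W}$, the same identity $M^TQH_k+R = W + M^TQM\Delta$ yielding $\phi_k \leq |1-\alpha_k|+\alpha_k w\gamma$, and the same unrolling of the affine recursion $d_{k+1}\leq \phi_k d_k + \alpha_k\sigma_k + e_k$. Your explicit add-and-subtract of $\tilde\nabla f_k(x_k^*)$ is just a cleaner bookkeeping of the paper's inline manipulation, and the off-by-one in the $\Phi_{1,k}$ indexing you flag is present in the paper's own statement as well.
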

\begin{proof}
	We first bound the distance between the input updates and the corresponding optimal inputs.
	\begin{equation}
		\begin{aligned}
			\label{eq:iter_Dyn}
			||x_{k+1} &- x^*_{k+1}||_W \leq ||x_{k+1} - x^*_k||_W + ||x^*_k - x^*_{k+1}||_W \\
			&=||\Pi_{\U}^{W}\left( x_k - \alpha_k W^{-1}\tilde \nabla f_k(x_k) \right) \\
			&\quad- \Pi_{\U}^{W}\left( x^*_k - \alpha_k W^{-1}\nabla f_k(x^*_k) \right) ||_W +e_k \\
			& \leq ||x_k - \alpha_k W^{-1} \left(M^TQ(H_k x_k - r) +Rx_k\right)\\
			&~+x^*_k - \alpha_k W^{-1} \left(H_k^TQ(H_k x^*_k - r) +Rx^*_k\right)||_W + e_k \\
			& \leq ||\left(I - \alpha_k W^{-1}\left( M^TQH_k+R\right)\right)(x_k-x^*_k) ||_W \\
			&\quad + \alpha_k ||W^{-1} (M\Delta)^TQ(H_k x^*_k - r) ||_W + e_k\\
			& \leq \phi_k ||x_k-x^*_k||_W + \alpha_k\sigma_k + e_k,
		\end{aligned}
	\end{equation}
	where in the first inequality we use the triangle inequality 
	and in the second equality the fact that $x^*_k$ is a fixed point of the POGD with the true gradient $\nabla f_k$, $x^*_k=\Pi_{\U}^{W}\left( x^*_k - \alpha_k W^{-1}\nabla f_k(x^*_k) \right)$, and Assumption \ref{assm:input}.
	For the other inequalities we use the fact that the weighted projection operator is nonexpansive in the weighted preconditioner norm, Cauchy-Schwartz inequality and Assumption \ref{Ass_sigma}. 
	Next, we show the step-size parameters required to ensure $\phi_k<1$. Using Assumption \ref{assm:model} and the specific choice of preconditioner we have 
	\begin{align}
		\label{eq:bound_phi_k}
		\phi_k &= ||I - \alpha_k W^{-1} (M^TQ(M + M\Delta)+R)||_W   \notag \\
		& = ||(1-\alpha_k)I - \alpha_k W^{-1}M^TQM\Delta) ||_W \notag \\
		& \leq |1 -\alpha_k| + \alpha_k || W^{-1}M^TQM\Delta||_W \notag \\
		& \leq |1 -\alpha_k| + \alpha_kw\gamma,
	\end{align}
	where $||I||_W = 1$ was used in the first inequality.
	To ensure that $\phi_k < 1$, $\alpha_k$ must be chosen such that
	\begin{equation}\label{eq:bound_phi_k_small_1}
		|1 - \alpha_k| < 1-  \alpha_kw\gamma.
	\end{equation}
	Then since $w\gamma <1$, $\alpha_k\in \left(0,\frac{2}{1+w\gamma}\right)$ implies $\phi_k <1$.
	By iterating \eqref{eq:iter_Dyn} one gets
	\begin{align*}
		||x_{k+1} - x^*_{k+1} ||_W &\leq ||x_1-x^*_1||_W \prod_{j=1}^k \phi_j\\
		&\quad + \sum_{j=1}^k \left(\left(\sigma_j\alpha_j + e_j\right) \prod_{i = j+1}^k \phi_{i}\right) ,
	\end{align*}
	where we adopt the convention $\prod_{j+1}^j a_j= 1$. 
	Using the Lipschitz constant $L_k$ we get 
	\begin{align*}
		f_{k+1}(&x_{k+1})- f_{k+1}(x^*_{k+1})\leq L_k ||x_{k+1} - x^*_{k+1} ||_W \\
		&\leq L_k \bigg( ||x_1-x^*_1||_W\prod_{j=1}^k \phi_j +  \bar{\sigma}\sum_{j=1}^k \alpha_j \prod_{i = j+1}^k \phi_{i} \\
		&\qquad +\sum_{j=1}^ke_j \prod_{i = j+1}^k \phi_{i},\bigg).
	\end{align*}
	Taking the sum from $1$ to $T$ and using the upper bound $\bar{L}$ instead of $L_k$ at each step gives the desired result.
\end{proof}

The condition $w\gamma <1$ can be fulfilled by choice of the regularization matrix $R$. 
To see this, define $w_1$ and $\gamma_1$ the values of $w$ and $\gamma$ associated with $R_1$ and $W_1$.
If $w_1 \gamma_1 \geq 1$, we can always find $R_2$ such that $w_2 \gamma_2 < 1$.
This is because, from the choice of preconditioner $W$ and the definition of $w$ in Theorem~\ref{thm:main_dynamic2}, $w$ scales approximately with $||W^{-1}||$ and thus is $\mathcal{O}(||R^{-1}||)$. On the other hand, a valid (possibly not the tightest) upper bound on the uncertainty size $\gamma$ is $\mathcal{O}(1)$ in $R$. Consider without loss of generality $\textstyle{R_1 = \rho_1I}$ and $\textstyle{R_2 = \rho_2I}$, with $\textstyle{\rho_2 > \rho_1>0}$.
Using the definitions, we have $\textstyle{||\Delta||_{W_1} \leq \cond(W_1^{1/2}) ||\Delta|| = \gamma_1}$, where $\cond(\cdot)$ denotes the condition number of the matrix.
Since $\textstyle{\rho_2 > \rho_1}$, $\textstyle{\cond(W_2) < \cond(W_1)}$ and thus $\textstyle{||\Delta||_{W_2} \leq \cond(W_2^{1/2}) ||\Delta||=\gamma_2 < \gamma_1}$, i.e. $\gamma_1$ is still a valid uncertainty size for the new choice of $R_2$.

Using the upper bound obtained in Theorem~\ref{thm:main_dynamic2}, we characterize the asymptotic behavior of the dynamic regret. 
\begin{corollary} [Average Regret of POGD-ILC]
	\label{cor:dyn_regret_asymp}
	Under the same conditions as Theorem~\ref{thm:main_dynamic2}, if $\alpha_k=\alpha_0 k^{-c}$, with $\alpha_0 \in \left(0,\frac{2}{1+w\gamma}\right)$, $0\leq c<1$, then 	
	\begin{align}
		\lim\limits_{T \to \infty}	\frac{J_d(T)}{T}  \leq \mathcal{O}(1)+\frac{\bar{L}\sum_{k=1}^T E_k}{T}.
	\end{align}
\vspace{0.01cm}
\end{corollary}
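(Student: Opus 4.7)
The plan is to divide the upper bound of Theorem~\ref{thm:main_dynamic2} by $T$ and argue that the first two summands contribute at most $\mathcal{O}(1)$ asymptotically, leaving the third summand in its stated form. As a preliminary, I would sharpen the contraction factor: for all $k$ with $\alpha_k = \alpha_0 k^{-c} \le 1$, inequality~\eqref{eq:bound_phi_k} yields $\phi_k \le 1-\mu\alpha_k$ with $\mu := 1-w\gamma > 0$. Since $\alpha_k \to 0$ when $c>0$ and is constant when $c=0$, at most finitely many indices can violate $\alpha_k\le 1$, and these contribute only a bounded additive constant that is absorbed into the $\mathcal{O}(1)$ term.

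For the initial-condition summand, I would use $1-x \le e^{-x}$ to bound
\begin{equation*}
\Phi_{1,k} \le \exp\!\left(-\mu\alpha_0 \sum_{i=1}^k i^{-c}\right).
\end{equation*}
Since $c<1$, the exponent grows at rate $k^{1-c}/(1-c)$, so $\Phi_{1,k}$ decays super-polynomially in $k$, and the infinite series $\sum_{k=1}^\infty \Phi_{1,k}$ converges. Dividing by $T$ then sends this contribution to zero.

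The decisive step is handling the double sum via a telescoping identity. From $\mu\alpha_j \le 1-\phi_j$ one obtains
\begin{equation*}
\mu\alpha_j \Phi_{j+1,k} \le (1-\phi_j)\Phi_{j+1,k} = \Phi_{j+1,k} - \Phi_{j,k},
\end{equation*}
and summing over $j=1,\dots,k$ telescopes to $\Phi_{k+1,k}-\Phi_{1,k}\le 1$ (using the empty-product convention $\Phi_{k+1,k}=1$). Hence $\sum_{j=1}^k \alpha_j \Phi_{j+1,k} \le 1/\mu$ uniformly in $k$, and the double-sum contribution to $J_d(T)/T$ is bounded by the constant $\bar L\bar\sigma/\mu$.

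The main obstacle is spotting this telescoping. A naive bound $\Phi_{j+1,k}\le 1$ would give $\sum_{k=1}^T\sum_{j=1}^k \alpha_j = \mathcal{O}(T^{2-c})$, which after dividing by $T$ diverges for every $c<1$ and fails to establish the corollary; it is essential to exploit the geometric decay of $\Phi_{j+1,k}$ jointly with the weights $\alpha_j$. Combining the three bounds yields $J_d(T)/T \le o(1) + \bar L\bar\sigma/\mu + \bar L\sum_{k=1}^T E_k/T$, which matches the claimed $\mathcal{O}(1) + \bar L\sum_{k=1}^T E_k/T$.
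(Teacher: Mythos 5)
Your proposal is correct in substance and arrives at the stated bound, but it takes a genuinely different route from the paper's proof. The three-term decomposition is the same; the difference lies in how Terms I and II are controlled. For Term II the paper views $S_k=\sum_{j=1}^k j^{-c}\Phi_{j+1,k}$ as the state of the scalar linear time-varying recursion $S_{k+1}=\phi_{k+1}S_k+(k+1)^{-c}$ and argues qualitatively (separately for $0<c<1$ and $c=0$) that this trajectory remains bounded, hence Term II is $\mathcal{O}(T)$; the authors deliberately favor this dynamical-systems viewpoint because it suggests control-theoretic (system-level-synthesis-type) design of the regret bound. You instead combine $\mu\alpha_j\le 1-\phi_j$ with the telescoping identity $(1-\phi_j)\Phi_{j+1,k}=\Phi_{j+1,k}-\Phi_{j,k}$ to obtain the explicit uniform bound $\sum_{j=1}^k\alpha_j\Phi_{j+1,k}\le 1/\mu$. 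This is sharper and fully quantitative precisely where the paper's step (``the sequence $S_k$ will converge to a finite constant $S_\infty$'') is only sketched, and your observation that the naive bound $\Phi_{j+1,k}\le 1$ is insufficient correctly identifies why the joint decay must be exploited. For Term I the paper bounds $\sum_{k\le T}\Phi_{1,k}\le(1-\bar\phi_T)^{-1}=\mathcal{O}(T^c)$ and divides by $T$, whereas your bound $\Phi_{1,k}\le\exp\bigl(-\mu\alpha_0\sum_{i\le k}i^{-c}\bigr)$ shows the series is in fact summable; both suffice for the claim.

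One small repair is needed in your preliminary step: the assertion that at most finitely many indices violate $\alpha_k\le 1$ fails when $c=0$ and $\alpha_0\in\bigl(1,\tfrac{2}{1+w\gamma}\bigr)$, since then every $\alpha_k=\alpha_0>1$. In that case \eqref{eq:bound_phi_k} still gives $\phi_k\le\alpha_0(1+w\gamma)-1<1$, so the telescoping argument goes through verbatim with $\mu$ replaced by $\bigl(2-\alpha_0(1+w\gamma)\bigr)/\alpha_0>0$; the conclusion is unaffected.
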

\begin{proof}
From Theorem~\ref{thm:main_dynamic2}, $J_d(T)$ is bounded by
\begin{align}
\underbrace{\bar{L} \delta_{x_1} \sum_{k=1}^T\Phi_{1,k}}_{\text{Term I}}+\underbrace{\bar{L} \bar{\sigma}\alpha_0 \sum_{k=1}^T\sum_{j=1}^k  j^{-c} \Phi_{j+1,k}}_{\text{Term II}} + \underbrace{\bar{L}\sum_{k=1}^T E_k}_{\text{Term III}}
\end{align}
where in Term II the explicit expression of the step size has been used. 
Term I can be interpreted as the contribution to the regret due to distance of the initial decision from the optimal one. Define $\textstyle{\bar{\phi}_k := \sup_i\{\phi_i\}_{i=1}^k}$ and recall that $\textstyle{\bar{\phi}_k <1}$ by \eqref{eq:bound_phi_k},~\eqref{eq:bound_phi_k_small_1}, and the choice of the step size. Then
\begin{equation*}
	 \bar{L} \delta_{x_1} \sum_{k=1}^T\Phi_{j,k}\leq \bar{L} \delta_{x_1} \sum_{k=1}^T \bar{\phi}_k^k\leq \bar{L} \delta_{x_1} \sum_{k=1}^T \bar{\phi}_T^k\leq  \frac{\bar{L}\delta_{x_1}}{1-\bar{\phi}_T},
\end{equation*}
where we used the monotonicity of $\bar{\phi}_k$ in the second inequality, and the upper bound of the infinite sum in the last inequality.
From (\ref{eq:bound_phi_k}) there exists a finite $\bar{T}$, which depends on $\alpha_0$ and $c$, such that for $T>\bar{T}$
\begin{equation}
	 \bar{\phi}_T \leq 1 + (w\gamma-1) \alpha_0 T^{-c} 
\end{equation}
and thus
\begin{equation}
\lim\limits_{T \to \infty} \frac{\bar{L}\delta_{x_1}}{T(1-\bar{\phi}_T)}\leq \lim\limits_{T \to \infty} \frac{\bar{L}\delta_{x_1} T^{c}}{T(w\gamma-1) \alpha_0}=0
\end{equation}
whenever $c<1$.
Next, consider Term II and define $S_k:=\sum_{j=1}^k  j^{-c} \Phi_{j+1}^k$, which thus describes the growth of this term at each step $k$. Observe that
\begin{equation}\label{eq:s_dyn}
S_{k+1}=\phi_{k+1} S_k+\frac{1}{(k+1)^c}
\end{equation}
where we know from \eqref{eq:bound_phi_k} that $\phi_{k+1}<1$. For our choice of $\alpha_k$, two cases should be considered. When $0<c<1$ (i.e., vanishing step size), $\phi_{k+1} \rightarrow 1$ for $k \to \infty$. In the limit $k \to \infty$, the sequence $S_k$ will thus converge to a finite constant value $S_\infty$.
When $c=0$, $\phi_{k+1}<1$ as $k \to \infty$, and thus $S_k$ can be bounded between zero and the trajectory of an asymptotically stable linear system with constant input of $1$.
Therefore, by using the asymptotic behavior of the linear time varying system~\eqref{eq:s_dyn} we are able to characterize the asymptotic behavior of the regret for Term II.
As a result, in both cases Term II achieves linear regret
\begin{equation}
\lim\limits_{T \to \infty} \frac{\bar{L} \bar{\sigma}\alpha_0 \sum_{k=1}^T S_k}{T}\leq \mathcal{O}(1)
\end{equation}
\end{proof}
It is worth noting that the presented case generalizes some of the results from existing literature. As an example, \cite{dixit2019online} presents a similar result for the fixed step-size case and strongly convex cost functions which corresponds to the case with $c=0$ and appropriately chosen $R$.
Additionally, the interpretation of the regret bound in terms of the dynamical equation \eqref{eq:s_dyn} provides additional insights in terms of algorithm design and provides a basis for developing system-level synthesis-type regret optimal design~\cite{martin2022safe}. 

Corollary \ref{cor:dyn_regret_asymp} shows that the POGD algorithm applied to the ILC with model mismatch does not lead to a sublinear regret. The latter is regarded as a favorable property for sequential decision making algorithms, because it suggests that on the average the decisions asymptotically converge to the optimal ones at each stage. Convergence is prevented here by two terms, namely Term II and Term III. 
Term III is known as \emph{complexity} \cite{Hall_Willett_ICML_2013} or \emph{regularity} \cite{Jadbabaie_ICAIS_15} term in the dynamic regret literature and captures the effect of the temporal variability of the optimal sequence of actions. It is well-known that an upper-bound on the dynamic regret will have an explicit dependence on it and, in this setting, little can be said on its growth without prior information or assumptions on $H_k$.
By inspecting the derivation of the second term of the right hand-side in the bound \eqref{eq:iter_Dyn}, Term II is the contribution to the regret due to the suboptimality of the direction taken to update the decision at $k$. More precisely, this term is related to the term upper bounded by $\sigma_k$ in Assumption \ref{Ass_sigma} and is zero only if there is no model mismatch (i.e. $M=H_k$). 

\subsection{Adaptive POGD Algorithm}\label{ss:dyn_T_adapt}

Leveraging the previous observations and the proof of Corollary \ref{cor:dyn_regret_asymp}, modifications to the original POGD algorithm which are sufficient for achieving sublinear regret of Term II are discussed next. First, the required new assumptions are stated and discussed.
\begin{assumption}
\label{assm:model_2}
For all $k$, the true dynamics $H_k$ belongs to the set $\bm{H}_k(M_k,\Delta_k) :=\{H | \; H = M_k + M_k\Delta_k  \}$. $M_k$ is a full column rank nominal estimate at $k$ and the uncertainty $\Delta_k$ belongs to the unstructured norm bounded set $\bm{\Delta}_k(W,\gamma_k) :=\{\Delta | \; || \Delta ||_W \leq \gamma_k  \}$, where $\gamma_k\leq \gamma$ for all $k$ 
and $\gamma_k \to 0$ as $k \to \infty$.
\end{assumption}
This Assumption is a stronger version of Assumption \ref{assm:model} and requires the uncertainty size to asymptotically vanish. This could be achieved, for example, with an online identification scheme providing updated estimates of the model $M_k$ and of the uncertainty based on input-output measurements $\{(y_i,x_i)\}_{i=1}^T$ gathered during the decision making problem. Asymptotic convergence to zero of the estimation error $|| H_k - M_k||_W$ would also require appropriate excitation conditions on $r$ in the spirit of recursive parameter identification schemes used in adaptive control \cite{astrom2008adaptive}.

\begin{assumption}
\label{Ass_sigma_2}
	There exist $\tilde \sigma_k$ such that 
	$||W^{-1} (M_k\Delta_k)^TQ(H_k x^*_k - r) ||_W \leq \tilde\sigma_k$ for all $k$. 
	Moreover, $\tilde \sigma_k \to 0$ as $k \to \infty$.
\end{assumption}
This Assumption replaces Assumption \ref{Ass_sigma} and redefines the sequence of upper bounds $\tilde \sigma_k$ for the case when the estimate $M_k$ changes across iterations. The asymptotic behavior of $\tilde \sigma_k$ is a consequence of Assumptions \ref{assm:model_2}.
Further, define
\begin{equation*}
	\tilde \phi_k := ||I - \alpha_k W^{-1} (M_k^TQH_k+R_k)||_W,\quad \tilde \Phi_{j,k} := \prod_{i = j}^k \tilde \phi_{i}
\end{equation*}

Consider now an adaptive variation of the POGD algorithm described in Section \ref{sec:probForm} which, leveraging Assumption \ref{assm:model_2}, uses for its decisions the updated estimate of the model $M_k$.  The following Corollary shows that the associated dynamic regret is sublinear if the \emph{complexity} term is sublinear.
\begin{corollary}[Average Regret with Adaptation]
	\label{cor:dyn_regret_asymp_2}
	Under Assumptions \ref{assm:Lips}, \ref{assm:input}, \ref{assm:model_2}, and \ref{Ass_sigma_2}, consider the choice of preconditioner $W = M_1^TQM_1 + R_1$, with $R_1$ chosen so that 
		$w_k\gamma <1$ for all $k$, where $w_k := || W^{-1}M_k^TQM_k||_W$. Consider also the regularizer weighting matrix $\textstyle{R_k = W - M_k^TM_k}$. If the step-size is chosen as $\alpha_k=\alpha_0$ 
		with $\alpha_0 \in \left(0,\frac{2}{1+w\gamma}\right)$, then
		\begin{align}
\lim\limits_{T \to \infty}	\frac{J_d(T)}{T}  \leq\frac{\bar{L}\sum_{k=1}^T E_k}{T}.
	\end{align}
\vspace{0.01cm}
\end{corollary}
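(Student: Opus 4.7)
The plan is to mirror the argument in Theorem~\ref{thm:main_dynamic2} with the time-varying quantities $M_k$, $\Delta_k$, $R_k$, and $\tilde\sigma_k$ introduced by Assumptions~\ref{assm:model_2}--\ref{Ass_sigma_2}. Repeating the chain of inequalities leading to \eqref{eq:iter_Dyn}, but with $M$, $\Delta$, $R$ and $\sigma_k$ replaced by their iteration-indexed counterparts, yields the one-step contraction
\begin{equation*}
    \|x_{k+1}-x^*_{k+1}\|_W \leq \tilde\phi_k\|x_k-x^*_k\|_W + \alpha_0\tilde\sigma_k + e_k.
\end{equation*}
The choice $R_k = W - M_k^TQM_k$ is what keeps the preconditioner constant and makes $M_k^TQM_k+R_k = W$ across iterations, so that exactly as in~\eqref{eq:bound_phi_k} one obtains
\begin{equation*}
    \tilde\phi_k \leq |1-\alpha_0| + \alpha_0 w_k\gamma_k,
\end{equation*}
which is strictly less than $1$ under the step-size choice and the condition $w_k\gamma<1$.

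Unrolling the recursion and applying Assumption~\ref{assm:Lips} exactly as in the final step of Theorem~\ref{thm:main_dynamic2} produces the analogous bound
\begin{equation*}
    J_d(T) \leq \bar L\delta_{x_1}\sum_{k=1}^T \tilde\Phi_{1,k} + \bar L\alpha_0\sum_{k=1}^T\sum_{j=1}^k \tilde\sigma_j\tilde\Phi_{j+1,k} + \bar L\sum_{k=1}^T E_k.
\end{equation*}
The third term is exactly the complexity contribution appearing on the right-hand side of the statement, so the task reduces to showing that the first two terms, once divided by $T$, vanish as $T\to\infty$.

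Term I is handled in the same way as in Corollary~\ref{cor:dyn_regret_asymp}: since $\tilde\phi_k$ is uniformly bounded away from $1$ (indeed $\tilde\phi_k \to |1-\alpha_0|<1$ because $\gamma_k\to 0$), the geometric series $\sum_{k=1}^T\tilde\Phi_{1,k}$ is uniformly bounded in $T$, and dividing by $T$ sends it to zero. Term II is where the new assumptions pay off and is the main obstacle of the proof. Define $\tilde S_k := \sum_{j=1}^k \tilde\sigma_j\tilde\Phi_{j+1,k}$, which satisfies the scalar LTV recursion
\begin{equation*}
    \tilde S_{k+1} = \tilde\phi_{k+1}\tilde S_k + \tilde\sigma_{k+1},
\end{equation*}
analogous to~\eqref{eq:s_dyn}. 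The key point is that this is now an asymptotically stable LTV system driven by a vanishing input: $\tilde\phi_k \to |1-\alpha_0|<1$ and $\tilde\sigma_k\to 0$ by Assumption~\ref{Ass_sigma_2}. A standard input-to-state stability argument (bounding $\tilde S_k$ by a decaying homogeneous response plus a steady-state term proportional to $\sup_{j\geq K}\tilde\sigma_j$, and letting $K\to\infty$) gives $\tilde S_k\to 0$. Cesàro averaging then delivers $\frac{1}{T}\sum_{k=1}^T \tilde S_k \to 0$, so Term II divided by $T$ also vanishes. Combining the three contributions yields the claimed asymptotic bound.
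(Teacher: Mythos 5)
Your proposal is correct and follows essentially the same route as the paper: the same one-step contraction with the iteration-indexed quantities, the same observation that $R_k = W - M_k^TQM_k$ fixes the preconditioner so that $\tilde\phi_k \leq |1-\alpha_0| + \alpha_0 w_k\gamma_k$ is uniformly bounded below $1$, and the same treatment of Term II via the scalar recursion $\tilde S_{k+1} = \tilde\phi_{k+1}\tilde S_k + \tilde\sigma_{k+1}$ driven by the vanishing input $\tilde\sigma_k$. Your explicit ISS-plus-Ces\`aro argument (and the corrected index $\tilde\sigma_{k+1}$ in the recursion) is in fact slightly more careful than the paper's ``by iterating \eqref{new_S_dyn}'' step, but it is the same idea.
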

\begin{proof}
Following the derivations in \eqref{eq:iter_Dyn}, the distance between the input updates and the corresponding optimal inputs is
		\begin{align}
			&||x_{k+1} - x^*_{k+1}||_W \notag\\
			& \leq ||\left(I - \alpha_0 W^{-1}\left( M_k^TQH_k+R_k\right)\right)(x_k-x^*_k) ||_W\notag \\
			& ~+ \alpha_0 ||W^{-1} (M_k\Delta_k)^TQ(H_k x^*_k - r) ||_W + ||x^*_k - x^*_{k+1}||_W\notag\\
			& \leq \tilde \phi_k ||x_k-x^*_k||_W + \alpha_0\tilde \sigma_k + e_k,\label{eq:iter_Dyn_2}
		\end{align}
where Assumptions \ref{assm:input} and \ref{Ass_sigma_2}, the choice of $R_k$ and constant step-size were used. Using Assumption \ref{assm:model_2} and the choice of $R_k$ we further get
	\begin{align}
	\label{eq:bound_phi_k_tilde}
		\tilde \phi_k &= ||(1-\alpha_0)I - \alpha_0 W^{-1}M_k^TQM_k\Delta_k) ||_W   \notag \\
		& \leq |1 -\alpha_0| + \alpha_0 || W^{-1}M_k^TQM_k\Delta_k||_W \notag \\
		& \leq |1 -\alpha_0| + \alpha_0 w_k\gamma_k\leq |1 -\alpha_0| + \alpha_0 w_k\gamma,
	\end{align}
Then since $w_k\gamma <1$, $\alpha_0 \in \left(0,\frac{2}{1+w\gamma}\right)$, there exists $\hat \phi <1$ such that $\tilde \phi_k <\hat \phi$ for all $k$.
	Similarly to the proof of Theorem \ref{thm:main_dynamic2}, we then get
	\begin{align*}
		J_d(T)\! \leq \! \bar{L} \delta_{x_1} \sum_{k=1}^T\Phi_{j,k}\!+\!\bar{L} \alpha_0 \sum_{k=1}^T\sum_{j=1}^k \tilde \sigma_j \Phi_{j+1,k}\! +\! \bar{L}\sum_{k=1}^T E_k 
	\end{align*}
Note that Term I and Term III are unchanged, thus the former has again a sublinear growth because the constant step size satisfies the conditions of Corollary \ref{cor:dyn_regret_asymp}. As for Term II, the key difference is that now the stepsize is constant and $\sigma_k$ is kept inside the inner summation. As a result, the variable $\tilde S_k:=\sum_{j=1}^k  \tilde \sigma_k \tilde \Phi_{j+1,k}$ describing the growth of Term II at each step $k$ is such that
\begin{equation}\label{new_S_dyn}
\tilde S_{k+1}\leq \hat \phi \tilde S_k+\tilde \sigma_k 
\end{equation}
where, from Assumption \ref{Ass_sigma_2}, $\textstyle{\sigma_k \to 0}$ as $\textstyle{k \to \infty}$. By iterating \eqref{new_S_dyn} it is seen that the sequence $\tilde S_k$ converges to $\textstyle{\tilde S_\infty=0}$ and thus Term II achieves sublinear regret. 
\end{proof}
Compared to the originally considered POGD algorithm, the adaptive version features three major changes: the model estimate is updated on line; the step size is kept constant (non-diminishing); the regularization matrix is adapted as a function of the current model estimate. Note that at this stage this is not a complete algorithm, as it needs to be complemented by an online identification algorithm satisfying Assumption \ref{assm:model_2}. The purpose of its presentation is primarily to establish conditions on this complementary identification procedure to make the commonly used POGD algorithm competitive from a regret perspective. 

\subsection{Static Regret}\label{ss:stat}
Whereas dynamic regret provides a powerful metric for analyzing the performance of an online learning algorithm, its upper bound depends on the limiting behavior of the complexity term, Term III, which is unknown in general. This term disappears in the static regret case \eqref{def:stat_regret}, which is studied next. 
As a reminder, the fixed input $\textstyle{x^*}$ computed in hindsight to minimize the sum of observed costs, i.e. $\textstyle{x^* = \arg\min_{x\in\U}\sum_{k=1}^T f_k(x)}$, see~\eqref{def:stat_regret}.
The analysis is based on the following assumption.
\begin{assumption}
\label{Ass_eta}
	There exist $\eta_k$ such that $|| W^{-1/2} \tilde\nabla f_k(x^*)|| \leq \eta_k$ for all $k$, and $\textstyle{\bar{\eta} = \sup_k\{\eta_k\}}$.
\end{assumption}
\begin{corollary}
	\label{cor:static_regret}
Under the conditions of Theorem~\ref{thm:main_dynamic2} and Assumption \ref{Ass_eta}, the static regret of POGD is upper bounded by	
	\begin{align}\label{eq:stat_bound}
		J_s(T) \leq  \bar{L} \delta_{x_1} \sum_{k=1}^T\Phi_{j,k}+\bar{L} \bar{\eta} \sum_{k=1}^T\sum_{j=1} \alpha_j \Phi_{j+1,k}
	\end{align}
\end{corollary}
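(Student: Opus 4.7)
The plan is to closely follow the template of Theorem \ref{thm:main_dynamic2}, but to replace the hindsight optimum $x_k^*$ by the time-invariant $x^*$ that attains the minimum in \eqref{def:stat_regret}. The key structural simplification is that, since $x^*$ does not depend on $k$, the regularity term $e_k$ disappears; the key structural complication is that $x^*$ is no longer a fixed point of the POGD step driven by the true gradient $\nabla f_k$, so the residual gradient at $x^*$ must be carried through the recursion.

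First I would bound $||x_{k+1}-x^*||_W$. Since $x^*\in\U$, non-expansiveness of $\Pi_\U^W$ in the weighted norm yields
\begin{equation*}
	||x_{k+1}-x^*||_W \leq ||\,x_k - \alpha_k W^{-1}\tilde\nabla f_k(x_k) - x^*\,||_W.
\end{equation*}
Because $\tilde \nabla f_k$ is affine in $x$, we have the exact identity $\tilde\nabla f_k(x_k)-\tilde\nabla f_k(x^*) = (M^TQH_k+R)(x_k-x^*)$. Adding and subtracting $\alpha_k W^{-1}\tilde\nabla f_k(x^*)$ inside the norm and applying the triangle inequality gives
\begin{equation*}
	||x_{k+1}-x^*||_W \leq \phi_k \,||x_k - x^*||_W + \alpha_k \,||W^{-1}\tilde\nabla f_k(x^*)||_W.
\end{equation*}
Noting that $||W^{-1}v||_W = ||W^{-1/2}v||$ for any $v$, Assumption \ref{Ass_eta} bounds the second term by $\alpha_k \eta_k$.

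The bound $\phi_k<1$ under the choice of preconditioner and $\alpha_k\in(0,\tfrac{2}{1+w\gamma})$ was already established in the derivation \eqref{eq:bound_phi_k}--\eqref{eq:bound_phi_k_small_1} of Theorem \ref{thm:main_dynamic2}, and so it applies verbatim here. I would then iterate the recursion from $k=1$ to obtain
\begin{equation*}
	||x_{k+1}-x^*||_W \leq \delta_{x_1}\Phi_{1,k} + \sum_{j=1}^k \eta_j\alpha_j\Phi_{j+1,k},
\end{equation*}
with the convention $\Phi_{k+1,k}=1$, and use Assumption \ref{assm:Lips} together with the uniform bounds $\bar L$ and $\bar \eta$ to pass from iterate errors to cost differences $f_k(x_k)-f_k(x^*)$. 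Summing over $k=1,\dots,T$ yields the claimed bound \eqref{eq:stat_bound}.

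I do not expect a significant obstacle: the affine structure of $\tilde\nabla f_k$ removes any need for a strong-convexity/co-coercivity argument to handle the residual gradient at $x^*$, and Assumption \ref{Ass_eta} was tailored so that this residual appears cleanly in the recursion. The only subtlety is the norm identity $||W^{-1}v||_W = ||W^{-1/2}v||$ which aligns the bound produced by the recursion with the quantity assumed in Assumption \ref{Ass_eta}; everything else reuses machinery already developed in the proof of Theorem \ref{thm:main_dynamic2}.
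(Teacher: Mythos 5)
Your proposal is correct and follows essentially the same route as the paper's proof: non-expansiveness of the weighted projection, adding and subtracting $\alpha_k W^{-1}\tilde\nabla f_k(x^*)$ to expose the contraction factor $\phi_k$ (the paper writes out the affine terms explicitly rather than citing the identity $\tilde\nabla f_k(x_k)-\tilde\nabla f_k(x^*)=(M^TQH_k+R)(x_k-x^*)$, but it is the same step), the norm identity $\|W^{-1}v\|_W=\|W^{-1/2}v\|$ to invoke Assumption \ref{Ass_eta}, and then the iteration and summation machinery of Theorem \ref{thm:main_dynamic2} with $\bar\sigma$ replaced by $\bar\eta$ and no $e_k$ term.
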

\begin{proof}
We start by bounding the distance between POGD solutions and the optimal solution $x^*$ for the static regret. 
	\begin{align}
		||x_{k+1} &- x^*||_W = ||\Pi_{\U}^{W}\left( x_k - \alpha_k W^{-1}\tilde \nabla f_k(x_k) \right) - x^*||_W\notag\\
					&\leq ||x_k - \alpha_k W^{-1}\tilde \nabla f_k(x_k) - x^*||_W \notag\\
		& \leq ||x_k - \alpha_k W^{-1} \left(M^TQ(H_k x_k - r) +Rx_k\right)\notag\\
		&~-x^* + \alpha_k W^{-1} \left(M^TQ(H_k x^* - r) +Rx^*\right)\notag\\
					&~-\alpha_k W^{-1} (M^TQ(H_k x^* - r)+Rx^*) ||_W \notag\\
		& \leq ||\left(I - \alpha_k W^{-1}\left( M^TQH_k+R\right)\right)(x_k-x^*_k) ||_W \notag \\
		&\quad + \alpha_k ||W^{-1} \left(M^TQ(H_k x^* - r)  + Rx^* \right) ||_W \notag \\
		& \leq \phi_k ||x_k-x^*||_W + \alpha_k||W^{-1} \tilde\nabla f_k(x^*)||_W\notag \\
		& \leq \phi_k ||x_k-x^*||_W + \alpha_k|| W^{-1/2} \tilde\nabla f_k(x^*)||,\label{eq:static_reg}
	\end{align}
where similar to the proof of Theorem~\ref{thm:main_dynamic2} we use non-expansive property of the weighted projection and triangle inequality to derive the desired result. In the last step, we use the definition of the weighted norm to represent the second term in the induced matrix 2-norm. 
Starting from \eqref{eq:static_reg}, we get the upper bound in \eqref{eq:stat_bound} by following the same steps as the proof of Theorem~\ref{thm:main_dynamic2}, where $\bar{\sigma}$ is here replaced by $\bar{\eta}$, since the $\phi_k$ terms are identical in both proofs.
\end{proof}
Following the arguments of Corollary~\ref{cor:dyn_regret_asymp}, it can be seen that the static regret grows linearly due to the new Term II (now depending on $\bar{\eta}$). In the interest of space, detailed discussions on the static regret and the modifications needed to achieve sublinearity are omitted here as they follow closely the dynamic regret counterparts.
Note that the optimality of $x^*$ is not necessary for the proof of Corollary~\ref{cor:static_regret}. 
Therefore, the static regret defines a worst-case cost optimality gap against any static policy played over the iteration horizon of $T$. 
This observation provides further insight on the meaning of regret, and its distinction with respect to other metrics such as optimality gap or convergence rates. 

\section{The Iteration Invariant Problem}\label{sec:ilc}
In this section, we specialize the results of Theorem~\ref{thm:main_dynamic2} to more commonly considered ILC settings featuring the assumption on constant cost function, i.e., $\textstyle{f_k(x) = f(x)}$ for all iterations $k$. 
Specifically, we assume that $\textstyle{f(x) = \frac{1}{2}\left( || H x - r||^2_Q +||x||_R^2\right)}$, 
where the true dynamics $ H $ has the same uncertainty description defined in Assumption \ref{assm:model} but is now time-invariant, i.e., $H = M + M\Delta$ for a fixed realization $\Delta$ in all iterations. This results in the ILC update
\begin{equation}
	\label{eq:ilc_normal}
	x_{k+1} = \Pi_{\U}^{W}\left( x_k - \alpha_k W^{-1}\tilde \nabla f(x_k) \right),
\end{equation}
which now has a fixed point $\bar{x}$ under suitable conditions, see~\cite{liao2022robustness,son2015robust}.
Following our analysis in the proof of Theorem~\ref{thm:main_dynamic2}, it is easy to see that the step-size rule given in the theorem with the given preconditioner choice ensures convergence to the fixed point, i.e., $\textstyle{||x_{k+1} -\bar{x}||_W \leq \phi_k ||x_k - \bar{x}||_W}$ (Proof omitted in the interest of space).
Additionally, note that the fixed point $\bar{x}$ is not necessarily the optimal point $x^*$ due to the model mismatch, thus $||\bar{x}-x^*||$ is nonzero in the general case (see~\cite{liao2022robustness} for a detailed characterization of the ILC fixed point).
The ILC iteration \eqref{eq:ilc_normal} is a specific instance of the main POGD algorithm given in \eqref{eq:ilc_update}. 
Therefore, the regret analysis in this section follows the results from previous sections.

The ILC update \eqref{eq:ilc_normal} with $\textstyle{\U = \reals^{n_x}}$ and $\textstyle{\alpha_k=1}$ results in a variant of the controller commonly referred as norm-optimal ILC under suitable preconditioner matrix design~\cite{barton2010norm,van2016optimality,van2009iterative,hoelzle2015spatial,balta2020switch}, while the case of convex $\U \subset \reals^{n_x}$ with a suitably chosen fixed step-size $\alpha_k = \bar{\alpha}$ is a variant of the optimization-based ILC~\cite{liao2022robustness,son2015robust}.
The output dynamics in the form $\textstyle{y(x_k) = Hx_k + b}$, with iteration invariant unknown offset $b$ is considered in~\cite{hoelzle2015spatial,barton2010norm,rezaeizadeh2017iterative}, for the case with $\textstyle{\U = \reals^{n_x}}$ and $\textstyle{\alpha_k=1}$. 
Furthermore, the case of iteration varying but bounded $b$ results in a bounded input bounded output stability condition~\cite{barton2010norm}, or more generally characterized as an input-to-state stability property in~\cite{liao2022robustness}.
The robust performance of similar unconstrained ILC algorithms under various uncertainty representations is discussed in~\cite{van2009iterative}, analyzing the robust monotonic convergence conditions via tools from the robust control literature. 
Due to the model mismatch, and also for the cases with bounded disturbance, such ILC algorithms achieve nonzero asymptotic error. Therefore, it is desirable to design control parameters to minimize the asymptotic ratio (gain) of the fixed point mismatch $||\bar{x}-x^*||$ to the uncertainty size in the problem, e.g., size of the uncertainty set or the disturbance set.
Here, we proceed with the output dynamics of the form $\textstyle{y(x_k) = Hx_k }$ for simplicity and draw conclusions about the interpretation of regret in the presence of the fixed point mismatch $||\bar{x}-x^*||$.

Since here we have $f_k(x) = f(x)$ for all iterations $k$, the constant input $\textstyle{x^* = \arg\min_{x\in\U}f(x)}$ is the optimal action for both the dynamic and static problems, thus the associated notions of regret coincide and will be referred to as \emph{ILC regret}. 
Therefore, we only investigate related corollaries of Theorem~\ref{thm:main_dynamic2} and provide additional results for the case with model learning, as in Corollary~\ref{cor:dyn_regret_asymp_2}. Additionally, we provide insights on how the optimality of the ILC fixed point has an interpretation for certain cases and how the implication of model learning on optimality and regret differ.
An auxiliary lemma is presented first, followed by the ILC regret results.
\begin{lemma}
	\label{lem:sum_bound}
	For any $\{a_i\}_{i=1}^T$ and $c\in(0,1)$, the following upper bound holds.
	\begin{align*}
		S_T = \sum_{k=1}^T\sum_{j=1}^kc^{k-j}a_j \leq \frac{1}{1-c}\sum_{j=1}^Ta_j.
	\end{align*}
\end{lemma}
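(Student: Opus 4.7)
The plan is to prove this by interchanging the order of summation and recognizing the resulting inner sum as a truncated geometric series, which is bounded by its infinite counterpart.

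First, I would swap the order of summation. The double sum ranges over pairs $(k,j)$ with $1 \leq j \leq k \leq T$. Re-indexing so that $j$ is the outer variable and $k$ the inner one gives
\begin{equation*}
S_T = \sum_{k=1}^T\sum_{j=1}^k c^{k-j}a_j = \sum_{j=1}^T a_j \sum_{k=j}^T c^{k-j}.
\end{equation*}

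Next, I would substitute $i = k - j$ in the inner sum to obtain
\begin{equation*}
\sum_{k=j}^T c^{k-j} = \sum_{i=0}^{T-j} c^{i}.
\end{equation*}
Since $c \in (0,1)$, the partial geometric series is bounded by its infinite sum, i.e. $\sum_{i=0}^{T-j} c^{i} \leq \sum_{i=0}^{\infty} c^{i} = \frac{1}{1-c}$. Substituting back (and noting that the $a_j$ entering the regret bounds of Theorem~\ref{thm:main_dynamic2} and Corollary~\ref{cor:dyn_regret_asymp} are non-negative, so the inequality propagates term by term) yields
\begin{equation*}
S_T \leq \sum_{j=1}^T a_j \cdot \frac{1}{1-c} = \frac{1}{1-c}\sum_{j=1}^T a_j,
\end{equation*}
which is the desired bound.

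There is essentially no significant obstacle in this proof: the technique of swapping summation order for a telescoping/geometric kernel is routine. The only point requiring attention is that the step from the truncated geometric sum to its infinite counterpart preserves the inequality only when the $a_j$ are non-negative; this is consistent with the way the lemma is applied in the sequel, where the $a_j$ represent quantities such as $\sigma_j$, $e_j$, or error magnitudes.
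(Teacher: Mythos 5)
Your proposal is correct and takes essentially the same approach as the paper: the paper expands the double sum and regroups the coefficient of each $a_j$ into a truncated geometric series bounded by $\frac{1}{1-c}$, which is exactly your interchange of summation order. Your explicit remark that the final inequality requires $a_j \geq 0$ is a point the paper's proof leaves implicit, and it is a valid observation since the lemma is only applied to non-negative sequences.
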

\begin{proof}
	Expanding the outer sum and writing out the resulting partial sums results in
	\begin{align*}
		S_T &=  a_1 + \textstyle{\sum_{j=1}^2c^{2-j}a_j + \ldots + \sum_{j=1}^Tc^{T-j}a_j }\\
		&= a_1 + a_1c + a_2 +  \ldots + a_1c^{T-1} + a_2c^{T-2} + \ldots + a_T \\
		& = a_1 (1+c^2+\ldots+c^{T-1}) \\
		& \quad +a_2 (1+c^2+\ldots+c^{T-2}) + \ldots + a_T \\
		&\leq (1-c)^{-1}(a_1 + a_2 + \ldots + a_T),
	\end{align*}
	where we used the sum of the infinite series as an upper bound, since $c\in(0,1)$.
\end{proof}
\begin{proposition}[ILC regret for~\eqref{eq:ilc_normal}]\label{prop:ilc_normal}
	Under Assumptions \ref{assm:model} and \ref{assm:Lips}, consider the choice of preconditioner $W = M^TQM + R$ and define $w := || W^{-1}M^TQM||_W$. If $w\gamma <1$
	and a constant step-size is chosen as $\alpha=\alpha_0 \in \left(0,\frac{2}{1+w\gamma}\right)$, the static/dynamic regret for the controller update \eqref{eq:ilc_normal} is bounded by
	\begin{align}
		J_{\mbox{ILC}}(T) \leq  \frac{\bar{L} \left(\delta_{x_1} + \sigma \alpha_0 T\right) }{1-\phi},  
	\end{align}
	where $\phi = ||I - \alpha_0 W^{-1} (M^TQH+R)||_W$, $\delta_{x_1}:=||x_1-x^*||_W$, 
	and $\sigma \geq 0$ is such that $||W^{-1} (M\Delta)^TQ(H x^* - r) ||_W \leq \sigma$.
\end{proposition}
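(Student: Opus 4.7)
The plan is to recognize Proposition~\ref{prop:ilc_normal} as a specialization of Theorem~\ref{thm:main_dynamic2} to the iteration-invariant case, then to exploit the constant nature of all iteration-dependent quantities, together with a direct application of Lemma~\ref{lem:sum_bound}, to recover the announced closed-form bound.

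First I would verify that the hypotheses of Theorem~\ref{thm:main_dynamic2} are in force. Assumptions~\ref{assm:model} and~\ref{assm:Lips} are inherited from the statement. Assumption~\ref{assm:input} is satisfied trivially with $e_k=0$ because $x^*_k=x^*$ is now iteration-invariant; this is the key simplification, as it makes the complexity bound $E_k$ vanish identically and Term~III of the theorem drop out. Assumption~\ref{Ass_sigma} is met by the constant sequence $\sigma_k\equiv\sigma$ defined in the proposition, so $\bar{\sigma}=\sigma$. Finally, since $f_k\equiv f$ the contraction factors collapse to $\phi_k\equiv\phi$, and therefore $\Phi_{1,k}=\phi^k$ and $\Phi_{j+1,k}=\phi^{k-j}$.

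Substituting these constants into the bound of Theorem~\ref{thm:main_dynamic2}, and using $\alpha_j=\alpha_0$, the regret is controlled by
\begin{equation*}
\bar{L}\,\delta_{x_1}\sum_{k=1}^T \phi^k \;+\; \bar{L}\,\sigma\,\alpha_0 \sum_{k=1}^T\sum_{j=1}^k \phi^{k-j}.
\end{equation*}
The step-size window $\alpha_0\in(0,2/(1+w\gamma))$, combined with the derivation in \eqref{eq:bound_phi_k}--\eqref{eq:bound_phi_k_small_1}, guarantees $\phi<1$, so the geometric series bounds the first sum by $1/(1-\phi)$. For the double sum I would apply Lemma~\ref{lem:sum_bound} with $c=\phi$ and $a_j=1$, yielding $T/(1-\phi)$. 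Collecting the two contributions gives the stated bound $\bar{L}(\delta_{x_1}+\sigma\alpha_0 T)/(1-\phi)$.

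There is no serious technical obstacle: the proposition is essentially careful bookkeeping on top of Theorem~\ref{thm:main_dynamic2}, and the cleanest closeout uses exactly the auxiliary Lemma~\ref{lem:sum_bound} that was introduced for this purpose. The one conceptual point worth emphasizing before starting is that static and dynamic regret coincide here because the optimizer $x^*$ does not depend on $k$; this justifies handling $J_s(T)$ and $J_d(T)$ simultaneously under the single label $J_{\text{ILC}}(T)$, so no separate static-regret argument analogous to Corollary~\ref{cor:static_regret} is required.
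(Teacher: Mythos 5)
Your proposal is correct and follows essentially the same route as the paper: specialize the recursion \eqref{eq:iter_Dyn} to constant $x^*$ (so $e_k=0$ and Term~III vanishes), reduce the theorem's bound to $\bar{L}\delta_{x_1}\sum_{k=1}^T\phi^k+\bar{L}\sigma\alpha_0\sum_{k=1}^T\sum_{j=1}^k\phi^{k-j}$, and close with the geometric series and Lemma~\ref{lem:sum_bound} (with $a_j=1$). The observation that static and dynamic regret coincide for an iteration-invariant cost matches the paper's justification for the single label $J_{\mbox{ILC}}(T)$.
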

\begin{proof}
	The bound derived for $J_{\mbox{ILC}}(T)$ follows directly from the proof of Theorem~\ref{thm:main_dynamic2}, with constant $x^*$ instead of $x^*_k$. Recognizing the iteration invariant problem with identical cost functions $f(x)$, we can follow~\eqref{eq:iter_Dyn} to get 
	\begin{align}
		\label{eq:ilc_normal_a0}
		||x_{k+1} &- x^*||_W \leq \phi ||x_k-x^*||_W + \alpha\sigma.
	\end{align}
	Then, the resulting summation can be shown to be 
	\begin{align}
		J_{\mbox{ILC}}(T) \leq  \bar{L} \delta_{x_1} \sum_{k=1}^T\phi^k+\bar{L} \sigma \alpha_0 \sum_{k=1}^T\sum_{j=1}^k \phi^{k-j}.
	\end{align}
	Finally, by using the sum of the infinite series for the first term and Lemma~\ref{lem:sum_bound} for the second term, we obtain the desired result. 
\end{proof}
The bound for the iteration varying $\alpha_k$ follows similarly from Theorem~\ref{thm:main_dynamic2} and is omitted here. The following characterizes the asymptotic behavior of the limit.

\begin{corollary}\label{cor:ilc_normal_asymp}
	The ILC update \eqref{eq:ilc_normal} with time-varying and constant step-size selections 
	achieve constant average regret
	\begin{align}
		\lim\limits_{T \to \infty}	\frac{J_{\mbox{ILC}}(T)}{T}  \leq \mathcal{O}(1)
	\end{align}
\vspace{0.01cm}
\end{corollary}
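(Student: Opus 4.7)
The plan is to reduce this statement to the machinery already developed for Theorem~\ref{thm:main_dynamic2} and Corollary~\ref{cor:dyn_regret_asymp}, observing that in the iteration-invariant setting Assumption~\ref{assm:input} holds trivially with $e_k=0$ (since $x^*_k \equiv x^*$), so the ``complexity'' Term III is absent. This leaves only Term I (transient) and Term II (persistent model-mismatch error) to control, and both are of the type already analyzed.

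For the constant step-size case $\alpha_k = \alpha_0$, I would simply invoke Proposition~\ref{prop:ilc_normal} directly, dividing by $T$ and passing to the limit:
\begin{equation*}
	\lim_{T\to\infty} \frac{J_{\text{ILC}}(T)}{T} \leq \lim_{T\to\infty} \frac{\bar{L}\delta_{x_1}}{T(1-\phi)} + \frac{\bar{L}\sigma\alpha_0}{1-\phi} = \frac{\bar{L}\sigma\alpha_0}{1-\phi} = \mathcal{O}(1),
\end{equation*}
since $\phi<1$ under the stated step-size condition, as established in \eqref{eq:bound_phi_k}--\eqref{eq:bound_phi_k_small_1}.

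For the time-varying step size $\alpha_k = \alpha_0 k^{-c}$ with $0\leq c<1$, I would first retrace the contraction argument of Theorem~\ref{thm:main_dynamic2} with $e_k = 0$, yielding the one-step bound $\|x_{k+1}-x^*\|_W \leq \phi_k\|x_k-x^*\|_W + \alpha_k \sigma$, and after iteration
\begin{equation*}
	\|x_{k+1}-x^*\|_W \leq \delta_{x_1}\Phi_{1,k} + \sigma\sum_{j=1}^k \alpha_j \Phi_{j+1,k}.
\end{equation*}
Using the Lipschitz constant $\bar{L}$ and summing from $1$ to $T$ then gives a bound of the form $\bar{L}\delta_{x_1}\sum_{k=1}^T\Phi_{1,k} + \bar{L}\sigma\alpha_0\sum_{k=1}^T\sum_{j=1}^k j^{-c}\Phi_{j+1,k}$, which is exactly Term I plus Term II from the proof of Corollary~\ref{cor:dyn_regret_asymp} (with $\bar{\sigma}$ replaced by $\sigma$). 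The analysis carried out there applies verbatim: Term I is bounded by $\bar{L}\delta_{x_1}/(1-\bar{\phi}_T)$ and, divided by $T$, vanishes for $c<1$; Term II, through the recurrence $S_{k+1} = \phi_{k+1} S_k + (k+1)^{-c}$, yields a sequence $S_k$ that either converges to a finite $S_\infty$ (for $0<c<1$) or is uniformly bounded (for $c=0$), and hence $\sum_{k=1}^T S_k = \mathcal{O}(T)$, giving $\mathcal{O}(1)$ average regret.

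The only mildly delicate step is justifying that the Term II recurrence $S_k$ stays bounded when $c>0$, because then $\phi_k\to 1$; however, this is already handled in Corollary~\ref{cor:dyn_regret_asymp} by combining the vanishing of the forcing term $(k+1)^{-c}\to 0$ with the asymptotic behavior of the linear time-varying recursion. No additional estimates are required. The result then follows by combining the two term bounds in the limit.
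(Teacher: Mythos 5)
Your proposal is correct and follows exactly the route the paper indicates for this (omitted) proof: invoking Proposition~\ref{prop:ilc_normal} for the constant step size and re-running the arguments of Theorem~\ref{thm:main_dynamic2} and Corollary~\ref{cor:dyn_regret_asymp} with $e_k=0$ for the time-varying case. The observation that the iteration-invariant setting kills Term III, and the handling of Term I and Term II, match the paper's intended argument, including the delicate point about the recursion for $S_k$ when $\phi_k \to 1$.
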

The proof of Corollary~\ref{cor:ilc_normal_asymp} is omitted in the interest of space. It builds on the bounds given in Proposition~\ref{prop:ilc_normal} and uses the arguments adopted in Theorem~\ref{thm:main_dynamic2} and Corollary~\ref{cor:dyn_regret_asymp} specialized for a fixed cost function.

The linear regret is due to the mismatch term $||W^{-1} (M\Delta)^TQ(H x^* - r) ||_W$, which characterizes the distance $d = ||\bar{x} - x^*||$.
Therefore, we see a clear relationship between convergence and regret in this case. 
For example, if we have no model mismatch, i.e., $\Delta = 0$, then we have $\sigma = 0$, in which case the ILC update~\eqref{eq:ilc_normal} achieves sublinear regret. 
This is explained by the fact that the ILC fixed point $\bar{x}$ and $x^*$ coincide in this case as the ILC update uses the true gradient information.
Therefore, by improving on the fixed point by reducing $d$, it is possible to achieve sublinear regret. The following proposition illustrates another special case featuring sublinear regret.
\begin{proposition}\label{prop:ilc_normal_sublinear}
	Let Assumptions \ref{assm:model} and \ref{assm:Lips} be satisfied, and assume further that $\textstyle{\gamma <1}$.
	Consider the choice of preconditioner $\textstyle{W = M^TQM}$, cost $\textstyle{f(x) = \frac{1}{2} || H x - r||^2_Q}$, assume that $\textstyle{\min_{x\in\U}f(x) = 0}$
	and a constant step-size is chosen as $\textstyle{\alpha=\alpha_0 \in \left(0,\frac{2}{1+w\gamma}\right)}$.
	Then, the static/dynamic regret for the controller update \eqref{eq:ilc_normal} is bounded by
	\begin{align}
		J_{\mbox{ILC}}(T) \leq  \frac{\bar{L} \delta_{x_1} }{1-\phi},
	\end{align}
	where $\delta_{x_1}:=||x_1-x^*||_W$, $\textstyle{x^* = \arg\min_{x\in\U}f(x)}$ and $\phi = ||I - \alpha_0 W^{-1} (M^TQH)||_W$.
	Hence,
	\begin{align}
		\lim\limits_{T \to \infty}	\frac{J_{\mbox{ILC}}(T)}{T}  = 0
	\end{align}
\vspace{0.01cm}
\end{proposition}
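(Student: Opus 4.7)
The plan is to specialize the proof of Proposition~\ref{prop:ilc_normal} to the zero-regularization, zero-optimal-cost setting and observe that the model-mismatch term responsible for the linear growth of regret in the general case now vanishes identically.

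First, I would exploit the assumption $\min_{x\in\U} f(x)=0$. Since $f(x)=\tfrac{1}{2}\|Hx-r\|_Q^2$ and $Q \succ 0$, the optimizer $x^*$ must satisfy $Hx^*=r$. Consequently the quantity appearing in Assumption~\ref{Ass_sigma} (applied at the constant minimizer $x^*$) satisfies
\begin{equation*}
\|W^{-1}(M\Delta)^T Q(Hx^*-r)\|_W = 0,
\end{equation*}
so the constant $\sigma$ in Proposition~\ref{prop:ilc_normal} may be taken to be $0$.

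Second, I would verify that the step-size interval stated is admissible and produces a contraction $\phi<1$. With $W=M^TQM$ and $R=0$ one gets $W^{-1}M^TQM=I$, so $w:=\|W^{-1}M^TQM\|_W=\|I\|_W=1$. Repeating the computation that led to \eqref{eq:bound_phi_k}, but with $R=0$, yields
\begin{equation*}
\phi \;=\; \|(1-\alpha_0)I-\alpha_0\Delta\|_W \;\leq\; |1-\alpha_0|+\alpha_0\gamma,
\end{equation*}
and the hypothesis $\gamma<1$ combined with $\alpha_0\in(0,2/(1+\gamma))$ gives $\phi<1$ exactly as in \eqref{eq:bound_phi_k_small_1}.

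Third, I would reproduce the one-step bound \eqref{eq:ilc_normal_a0} in the proof of Proposition~\ref{prop:ilc_normal}, but now with the $\sigma$-term dropped: $\|x_{k+1}-x^*\|_W\leq \phi\,\|x_k-x^*\|_W$, so $\|x_k-x^*\|_W\leq \phi^{k-1}\delta_{x_1}$. Applying Assumption~\ref{assm:Lips} and summing the geometric series
\begin{equation*}
J_{\mbox{ILC}}(T) \;\leq\; \bar{L}\,\delta_{x_1}\sum_{k=1}^{T}\phi^{k-1} \;\leq\; \frac{\bar{L}\,\delta_{x_1}}{1-\phi}
\end{equation*}
gives the stated bound. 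Since this upper bound is independent of $T$, dividing by $T$ and sending $T\to\infty$ immediately yields the limit claim.

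The only non-mechanical step is the first one, namely recognizing that the zero optimal-cost assumption forces $Hx^*=r$ and hence kills the model-mismatch term that otherwise produces linear regret in Proposition~\ref{prop:ilc_normal}. Everything else is a direct specialization of the arguments already developed in Theorem~\ref{thm:main_dynamic2} and Proposition~\ref{prop:ilc_normal}.
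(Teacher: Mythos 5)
Your proposal is correct and follows essentially the same route as the paper: the paper likewise argues that $\min_{x\in\U}f(x)=0$ forces $Hx^*-r=0$, so the mismatch term bounded by $\sigma$ in Proposition~\ref{prop:ilc_normal} vanishes and the remaining geometric series gives the $T$-independent bound $\bar{L}\delta_{x_1}/(1-\phi)$. You in fact supply slightly more detail than the paper's sketch (verifying $w=\|I\|_W=1$ so that $\gamma<1$ is exactly the contraction condition), which is a welcome addition rather than a deviation.
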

The proof of Proposition~\ref{prop:ilc_normal_sublinear} follows from Proposition~\ref{prop:ilc_normal} by recognizing that we have here $\textstyle{||W^{-1} (M\Delta)^TQ(H x^* - r) ||_W = 0}$ by assumption, since $\textstyle{\min_{x\in\U}f(x) = 0}$ implies that $\textstyle{Hx^* - r = 0}$. Therefore, by having a small enough disturbance set, i.e., $\textstyle{\gamma<1}$, and assuming that the optimal input is feasible for the true dynamics ($\textstyle{Hx^* - r = 0}$), the ILC update~\eqref{eq:ilc_normal} has sublinear regret. 
We note that, while these assumptions are much stronger than those employed in this paper so far, they are provided as edge cases that may apply in certain scenarios. 

Following Corollary~\ref{cor:dyn_regret_asymp_2}, it is easy to see that the regret of the ILC update \eqref{eq:ilc_normal} becomes sublinear if model learning takes place concurrently with the controller iterations and Assumptions~\ref{assm:model_2} and \ref{Ass_sigma_2} are satisfied for the iteration invariant problem.
Note that the sublinear regret condition in this case is achieved without assuming any convergence rate for $\gamma_k$, as long as we have $\textstyle{||\bar{x}-x^*||\to 0}$ asymptotically. 
For example, certain model-free ILC applications use past input output measurements to improve model approximation continuously~\cite{rezaeizadeh2017iterative}.
Further analysis of similar approaches for model learning and adaptive POGD-ILC development is subject for future work.

\section{Numerical Demonstration}\label{sec:demo}
For the numerical demonstration we turn to process control for a Selective Laser Melting (SLM) additive manufacturing process.
In SLM, fine metal powder is deposited, melted with the help of a high power laser, and left to solidify in layers, to build a three dimensional object in a layer-by-layer fashion. 
The melt pool dynamics at the point where the laser interacts with the material is of crucial importance for the mechanical properties of the finished part.
Additionally, due to the complex physics and multiple sources of disturbances involved in the process, modeling and controlling the melt pool effectively is an important research challenge.
We use the high-fidelity numerical simulations of an SLM process presented in~\cite{afrasiabi2022sph} to model the melt pool length output as a function of the power input to the system. 
From the high-fidelity simulation data of~\cite{afrasiabi2022sph}, we extract a 5 dimensional discrete-time linear time invariant model of the form 
\begin{align}\label{eq:sim_model}
	\arraycolsep=1.4pt
\mathcal{H} :\left\{ \begin{array}{rl}  \xi(t+1)&= A \xi(t) + B v(t), \\
 y(t)&= C \xi(t),
\end{array}\right.
\end{align}
where $v(t)$ is the instantaneous power input to the system and $y(t)$ is the melt pool length, so that the model is single input single output (SISO).
The constraint set on the input power is defined by the minimum power requirement to initiate melting, and an upper limit based on the constraints of the actuating laser, given by $\textstyle{\mathcal{V} = [75,165]}$, in \textit{Watt}, so that we constraint our input to $\textstyle{v(t)\in\mathcal{V}}$.
Using the model~\eqref{eq:sim_model}, we construct the lifted input-output model of the system for an iteration duration of $100$ time steps, representing a single layer of the SLM process, so that $\textstyle{M\in\reals^{100\times 100}}$.
The input constraint set $\U$ for the POGD-ILC algorithm is then constructed using $\mathcal{V}$.
Following our model assumption, we compute the true input-output dynamics in each iteration as $\textstyle{H_k = M + M\Delta}$, where $\Delta$ has a diagonal structure, and is sampled from the set $\bm{\Delta}(W,\gamma_0)$ for each $k$. 
We present results (i) for the dynamic regret of Non-Adaptive POGD-ILC with diminishing step-size $\textstyle{\alpha_k = \alpha_0 k^{-c}}$, chosen according to Corollary~\ref{cor:dyn_regret_asymp}, and (ii) for the case with model learning Adaptive POGD-ILC, a constant step size chosen according to Corollary~\ref{cor:dyn_regret_asymp_2}.
For the adaptive POGD-ILC, we emulate the adaptation by a diminishing uncertainty set and $\textstyle{\gamma_k  = \gamma_0k^{-1/2}}$, where $\gamma_0$ is the initial uncertainty, also used in the non-adaptive case.

\begin{figure}
    \centering
    \includegraphics[width=0.95\columnwidth]{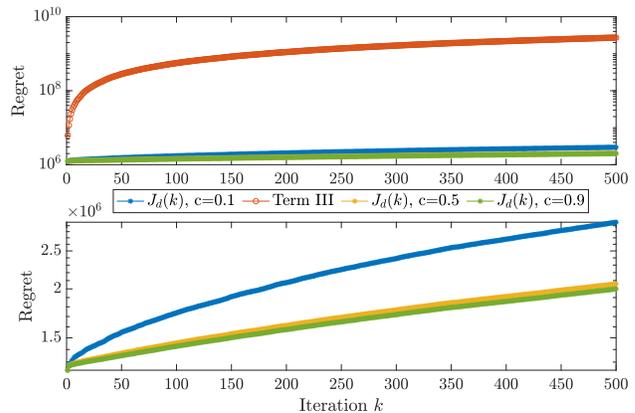}
    \caption{\textit{Top}: Dynamic regret with different step-size rules and Term III from Theorem~\ref{thm:main_dynamic2}.
    \textit{Bottom}: Zoomed plot of dynamic regret curves on a semi-logarithmic scale.}
    \label{fig:dyn_regret_comparison}
\end{figure}

The dynamic regret of the POGD-ILC controller under three step-size choices, differing for the rate of decay, is shown in Fig.~\ref{fig:dyn_regret_comparison}. 
The top plot shows in addition the complexity term (Term III from Theorem~\ref{thm:main_dynamic2}), providing part of the upper bound as predicted analytically. A close-up of the regret progression for the three step sizes is shown on the bottom plot of Fig.~\ref{fig:dyn_regret_comparison}. We see that the regret progression increases with diminishing $c$, which can be explained by the effect of the step size in Term II of Theorem~\ref{thm:main_dynamic2}, suggesting that larger step sizes result in increased regret upper bounds.

\begin{figure}
    \centering
    \includegraphics[width=0.8\columnwidth]{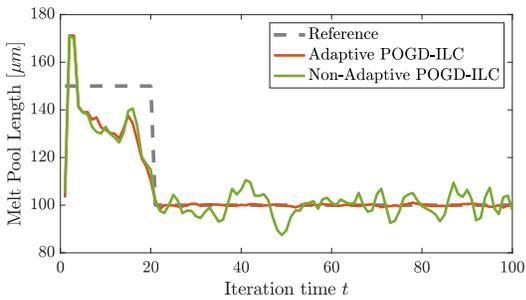}
    \caption{Comparison of the output trajectories at the last iteration $\textstyle{k=500}$, for the adaptive and non-adaptive cases.}
    \label{fig:final_output_comparison}
\end{figure}

The melt pool length output of the two scenarios are illustrated in Fig.~\ref{fig:final_output_comparison}. The tracking performance of the adaptive POGD-ILC is much better due to the model learning and adaptation, while the non-adaptive POGD-ILC still tracks the reference signal, albeit with higher error.

\section{Conclusion}\label{sec:conc}
This work analyzes the regret of online learning iterative learning controllers with model mismatch between the true process and the controller model. We propose a projected online gradient descent controller inspired by online convex optimization methods and analyze the regret performance of the proposed controller under various assumptions and conditions. The results are further extended to the cases related to some of the common ILC schemes from the literature with iteration invariant input-output dynamics. 
Simulation of the dynamic regret performance for the proposed controllers is investigated and numerical evidence is reconciled with the theoretic results. 

While an emulated model learning scheme was adopted in the simulation results to showcase the importance of adaptation, developing and implementing effective methods for model learning is an important research question and subject for future work. 
Moreover, considering additive disturbance models to capture measurement and process noise, and incorporating additional state constraints are important extensions to enable practical uses of the proposed work.

\section*{Acknowledgement}
The authors would like to thank Mamzi Afrasiabi for providing the simulation data used in model development for the numerical example, and the regret analysis reading group at the Automatic Control Lab at ETH Zurich, for useful discussions in conceptualizing this work.

\bibliographystyle{IEEEtran}

\bibliography{references}

\end{document}